
\documentclass[12pt]{colt2015} 


\usepackage{amsfonts}
\usepackage{algorithm}
\usepackage[noend]{algorithmic}
\usepackage{color}
\usepackage{bbm}

\title[Recursive Eigen-Decomposition]{Max vs Min: 
  Tensor Decomposition and ICA\\
with nearly Linear Sample Complexity}
\usepackage{times}



 \coltauthor{\Name{Santosh S. Vempala} \Email{vempala@gatech.edu}\\
 \addr School of Computer Science, Georgia Tech
 \AND
 \Name{Ying Xiao} \Email{ying.xiao@gatech.edu}\\
 \addr School of Computer Science, Georgia Tech
 }

\begin{document}

\maketitle


\newtheorem{claim}[theorem]{Claim}
\newtheorem{fact}[theorem]{Fact}

\newcommand{\inshort}[1]{}
\newcommand{\infull}[1]{#1}
\def\reals{\mathbb{R}}
\def\R{\reals}
\def\N{\mathbb{N}}
\def\sph{\mathbb{S}}
\def\psd{\succcurlyeq}
\def\eps{\epsilon}
\def\diam{\mathrm{D_{K}}}
\def\Lag{\mathcal{L}}
\def\poly{\mathrm{poly}}
\def\HH{{\cal H}}
\def\B{\mathbb{B}}
\def\C{\mathbb{C}}
\def\Z{\mathbb{Z}}
\def\geqs{\succcurlyeq}
\def\leqs{\preccurlyeq}
\def\argmax{\mathrm{argmax}}
\def\cum{\mathrm{cum}}
\def\Var{\mathrm{Var}}

\newcommand{\ball}[2]{#2\mathbb{B}_{#1}}
\newcommand{\abs}[1]{\left|#1\right|}
\newcommand{\norm}[1]{\left\|#1\right\|}
\newcommand{\sqnorm}[1]{\left\|#1\right\|^{2}}
\newcommand{\size}[1]{\left|#1\right|}
\newcommand{\prob}[1]{{\sf Pr}\left(#1\right)}
\newcommand{\probsub}[2]{{\sf Pr}_{#1}\left(#2\right)} 
\newcommand{\tr}[1]{\mathrm{tr} \left(#1\right)}
\newcommand{\eigmin}[1]{\lambda_{\min}(#1)}
\newcommand{\eigmax}[1]{\lambda_{\max}(#1)}
\newcommand{\vol}[1]{\operatorname{vol}\left(#1\right)}
\newcommand{\E}[1]{\mathbb{E}\left(#1\right)}
\newcommand{\T}[1]{#1^\mathrm{T}}
\newcommand{\ip}[2]{\left<{ #1},{#2}\right>}
\newcommand{\pard}[2]{\frac{\partial #1 }{\partial #2 }}
\newcommand{\spn}[1]{\mathrm{span}\left(#1\right)}
\newcommand{\EE}[2]{\mathbb{E}_{#1}\left(#2\right)}
\newcommand{\ind}[1]{\mathbb{I}_{#1}}
\newcommand{\sgn}[1]{\mathrm{sgn}\left( #1 \right)}
\newcommand{\conv}[1]{\mathrm{conv}\left\{ #1 \right\}}
\newcommand{\diag}[1]{\mathrm{diag}\left( #1 \right)}
\newcommand{\mat}[1]{\mathrm{mat}\left( #1 \right)}
\newcommand{\rank}[1]{\mathrm{rank}\left( #1 \right)}
\renewcommand{\det}[1]{\mathrm{det}\left( #1 \right)}
\renewcommand{\vec}[1]{\mathrm{vec}\left( #1 \right)}
\newcommand{\re}[1]{\operatorname{Re}\left( #1 \right)}
\newcommand{\im}[1]{\operatorname{Im}\left( #1 \right)}
\newcommand{\colspan}[1]{\operatorname{\mathsf{colspan}}\left( #1 \right)}
\newcommand{\maxgap}[1]{\operatorname{maxgap}\left( #1 \right)}
\newcommand{\bE}[1]{\bar{\mathbb{E}}\left(#1\right)}
\newcommand{\bMm}{\bar{M}_{\mu}}
\newcommand{\bMl}{\bar{M}_{\lambda}}
\newcommand{\bP}{\bar{P}}

\newcommand{\placeholder}{K}

\newcommand{\Vr}[1]{\mathrm{Var}\left(#1\right)}
\newcommand{\Conjug}[1]{}
\newcommand{\suchthat}{\;\ifnum\currentgrouptype=16 \middle\fi|\;}

\newcommand{\tocheck}[1]{\textcolor{blue}{#1}}
\newcommand{\angles}[1]{\left\langle #1 \right\rangle}

\newcommand{\expn}[1]{\mathrm{exp}\left(#1\right)}
\newcommand{\mytilde}{\raise.17ex\hbox{$\scriptstyle\mathtt{\sim}$}}
\newcommand{\nE}{\mathbb{E}}
\newcommand{\M}{{\bf M}}
\newcommand{\hide}[1]{}

\begin{abstract}
  We present a simple, general technique for reducing the sample 
  complexity of matrix and tensor decomposition algorithms applied to distributions. We use the technique to
  give a polynomial-time algorithm for standard ICA with   
  sample complexity nearly {\em linear} in the dimension, thereby improving substantially on previous bounds. 
The analysis is based
  on properties of random polynomials, namely the spacings of an
  ensemble of polynomials. Our technique also applies to other applications of tensor decompositions, including spherical Gaussian mixture models.
\end{abstract}

\begin{keywords}
Independent Component Analysis, Tensor decomposition, Fourier PCA, Sample Complexity, Eigenvalue spacing 
\end{keywords}



\section{Introduction}
Matrix and tensor decompositions have proven to be a powerful
theoretical tool in a number of models of unsupervised learning, e.g., Gaussian
mixture models \citep{vempala2004spectral, AndersonGMM,HsuK13,GVX14}, Independent Component
Analysis \citep{FriezeJK96,NguyenR09,BelkinRV12, AroraGMS12,GVX14}, and topic
models \citep{AnandkumarTensorDecomp} to name a few recent results. 
A common obstacle for these approaches is that while the tensor-based algorithms are polynomial in the dimension,
they tend to have rather high sample complexity. As data abounds, and labels are expensive, efficient
methods for unsupervised learning --- recovering latent structure from unlabeled data --- are becoming more important.

Here we consider the classic problem of Independent Component Analysis
(ICA), which originated in signal processing and has become a
fundamental problem in machine learning and statistics, finding
applications in diverse areas, including neuroscience, computer vision
and telecommunications \citep{hastie2009elements}. More recently, it has been used as a tool for
sparsifying layers in deep neural nets \citep{ngiam2010tiled}. The input to the problem is a set of
i.i.d. vectors from a distribution in $\R^n$. The latter is assumed to
be an unknown linear transformation of an unknown distribution with
independent $1$-dimensional component distributions. More precisely,
each observation $x \in \R^n$ can be written as $x = As$, where $A \in
\R^{n \times n}$ is an unknown matrix and $s \in \R^n$ has components
$s_1, \ldots, s_n \in \R$ generated independently (from possibly
different one-dimensional distributions). ICA is the problem of
estimating the matrix $A$, the basis of the latent product
distribution, up to a scaling of each column and a desired error
$\eps$. One cannot hope to recover $A$ if more than one $s_i$
is Gaussian --- any set of orthogonal directions in the subspace
spanned by Gaussian components would also be consistent with the
model. Hence the model also assumes that at most one component is
Gaussian and the other component distributions differ from being
Gaussian in some fashion; the most common assumption is that the
fourth cumulant, also called the {\em kurtosis}, is nonzero (it is zero for a
Gaussian).

Our main result is a polynomial-time algorithm for ICA, under the fourth cumulant assumption, using only {\em $\tilde{O}(n)$} samples, which is nearly optimal. This improves substantially on previous polynomial-time algorithms \citep{FriezeJK96, NguyenR09, AroraGMS12,  BelkinRV12, AnandkumarTensorDecomp, GVX14}, which all require a higher polynomial number of samples  ($\Omega(n^5)$ or higher). Our technique also applies more broadly to tensor decomposition problems where the tensor is estimated from samples. In particular, it applies to tensor decomposition methods used in analyzing topic models and learning mixtures of spherical Gaussians with linearly independent means; it extends to the setting where the model is corrupted with Gaussian noise. Before stating our results precisely, we place it in the context of related work.

\subsection{Related work}
Tensor decomposition is a major technique in the literature on ICA. The latter is vast, with many proposed algorithms (see \cite{ComonJutten, ICAbook}).  \cite{FriezeJK96} were the first to provide rigorous finite
sample guarantees, with several recent papers improving their
guarantee for the fully determined case when $A$ is nonsingular
\citep{NguyenR09, AroraGMS12, BelkinRV12,
  AnandkumarTensorDecomp}. These results either assume that component
distributions are specific or that the fourth moment is bounded away
from that of a Gaussian. \cite{GVX14} recently gave an algorithm that can deal with
differences from being Gaussian at any moment. The algorithm, called Fourier PCA, can handle unknown Gaussian noise. It extends to the  underdetermined setting where the signal $s$ has more components than the observation $x$ (so $A$ is rectangular), resulting in a polynomial-time algorithm under a much milder condition than the nonsingularity of $A$.

The main technique is all these papers can be viewed as efficient tensor decomposition. For a 
Hermitian matrix $A \in \R^{n \times n}$, one can give an
orthogonal decomposition into rank $1$ components:
\begin{align*}
  A = \sum_{i=1}^n \lambda_i v_i v_i^T.
\end{align*}
This decomposition, especially when applied to covariance matrices, is
a powerful tool in machine learning and theoretical computer science. 
The generalization of this to tensors is not
straightforward, and many versions of this
decomposition lead directly to NP-hard problems \citep{hillar2009most, Brubaker09}. The
application of tensor decomposition to ICA was proposed by
\cite{Cardoso89}. Such decompositions were used by \cite{anandkumar2012spectral} and \cite{HsuK13}
to give provable algorithms for various latent variable models. \cite{GVX14} extended these decompositions to a more general setting where
the rank-one factors need not be linearly independent (and thus might
be many more than the dimension).

In spite of these polynomial algorithms, even for standard ICA with a
square matrix $A$, the dependence of the time and sample complexity on
the conditioning of $A$ and the dimension $n$ make them impractical
even in moderately high dimension. 

\subsection{Our technique and results}

Suppose we are interested in finding the eigenvectors of a matrix estimated from samples, e.g., a covariance matrix of a distribution in $\R^n$. Then, for the decomposition to be unique, we need the eigenvalues to be distinct, and the sample complexity grows with the inverse square of the {\em minimum} eigenvalue gap. In many situations however, for such matrices, the {\em maximum} eigenvalue gap is much larger. The core idea of this paper is to use as many samples as needed to estimate the largest gap in the eigenvalues accurately, find the corresponding subspaces (which will be stable due the large gap), then recurse in the two subspaces by projecting samples to them. Our target application will be to ICA, but the technique can be used for tensor decomposition in other settings.

Our main result is a polynomial-time algorithm for ICA using only a nearly linear number of samples. Since each column of $A$ can only be recovered up to a scaling of the column, we can assume w.l.o.g. that $s$ is isotropic.
\begin{theorem}\label{thm:main}
  Let $x \in \R^n$ be given by an ICA model $x=As$, where $A \in \R^{ n
    \times n}$, the components of $s$ are independent, $\norm{s} \le K
  \sqrt{n}$ almost surely, and for each $i$, $\E{s_i}=0, \E{s_i^2}=1$, and  $\abs{\cum_{4}(s_i)}=\abs{\E{|s_i|^4}-3}
  \ge \Delta$. Let $\M \ge \max_i \E{|s_i|^5}$. Then for any $\eps < \Delta^3/(10^8 \M^2 \log^3 n)$, 
  with $\sigma = \Delta/(1000\M^2 \log^{3/2}n)$, with high probability, \textbf{Recursive FPCA}
 finds vectors $\{b_1, \ldots, b_n\}$
  such that there exist signs $\xi_i = \pm 1$ satisfying 
$\norm{A^{(i)} - \xi_i b_i} \le \eps\|A\|_2$
  for each column $A^{(i)}$ of $A$, using 
\[
O\left(n \cdot \frac{K^2 \M^4 \log^7 n}{\Delta^6\eps^2}\right) = O^*(n)
\]
samples. The running time is bounded by the time to compute $\tilde{O}(n)$ Singular Value Decompositions on real symmetric matrices of size $n \times n$.
\end{theorem}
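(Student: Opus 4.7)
My plan is to carry out the recursive Fourier PCA (FPCA) strategy sketched in the introduction. At each recursive call on a surviving subspace of dimension $d$, I form an empirical version of the Hermitian FPCA matrix $\M_u = A D_u A^T$ for $u \sim N(0,\sigma^2 I)$, where $D_u$ is diagonal with entries that are explicit polynomial functions of $(A^T u)_i$ whose leading behaviour is $\cum_4(s_i)(A^T u)_i^2$. Because $s$ is isotropic the columns of $A$ are orthonormal, so they are precisely the eigenvectors of $\M_u$, with eigenvalues $\lambda_i(u)$ given by those polynomials. I then locate the largest eigenvalue gap of the empirical matrix, split into the two invariant subspaces above and below that gap, project the samples into each subspace, and recurse.

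The crux of the analysis --- and the paper's main technical contribution --- is a \emph{max-gap lemma} for this spectral ensemble: with high probability over $u$, the largest gap between consecutive sorted $\lambda_i(u)$ is at least $\Omega(\Delta\sigma^2 / \operatorname{polylog}(n))$. I would prove this by combining a Carbery--Wright--type anti-concentration bound for each pairwise difference $\lambda_i(u)-\lambda_j(u)$ (a degree-two polynomial in $u$ whose leading coefficient is controlled by $\Delta$) with sub-exponential tail bounds showing the bulk range of the $\lambda_i(u)$ is at most $\sigma^2\operatorname{polylog}(n)$; $n$ values on a range of this size cannot all cluster within $\sigma^2/\operatorname{polylog}(n)$ of each other without violating the pairwise anti-concentration. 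The specific choice $\sigma=\Delta/(1000\M^2\log^{3/2}n)$ is calibrated so that higher-order contributions to $D_{u,ii}$ cannot dominate the $\cum_4$ term, keeping this quadratic analysis clean.

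Given the max-gap guarantee, a matrix-Bernstein argument using the hypothesis $\|s\|\le K\sqrt{n}$ and the fifth-moment bound $\M$ yields $\|\widehat{\M}_u - \M_u\|_2 \le \delta$ with $N = \tilde O(n/\delta^2)$ samples. Choosing $\delta$ to be a constant fraction of the max gap costs only $\tilde O(n)$ samples, and the Davis--Kahan $\sin\Theta$ theorem then places the empirical spectral subspaces on either side of the largest empirical gap within angle $O(\delta/\mathrm{gap}) = O(1/\operatorname{polylog} n)$ of the true invariant subspaces, which are spanned by disjoint subsets of columns of $A$. Projecting each observation $x=As$ into a recovered subspace preserves the ICA structure on the surviving components up to a controlled perturbation, so I can recurse. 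The recursion tree has depth $O(\log n)$; the subproblems on any one level partition the ambient dimension, so their sample counts sum to $\tilde O(n)$, and per-level subspace errors compose additively along any root-to-leaf path and are absorbed in the $\log^7 n$ overhead of the final bound. Choosing the estimation accuracy at the leaves sharply enough yields the $\eps\|A\|_2$ guarantee on individual columns.

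The main obstacle is the max-gap lemma itself: anti-concentration immediately lower bounds the \emph{minimum} gap, but the \emph{maximum} gap bound is a strictly stronger anti-clustering statement and seems to require both the degree-two polynomial structure of the $\lambda_i(u)$ and the near-independence of the coordinates of $A^T u$ for Gaussian $u$ (so that most eigenvalues behave like a scaled $\chi^2_1$ ensemble spread over a range much larger than $1/n$). A secondary bookkeeping obstacle is verifying that each recursive subproblem still satisfies the theorem's hypotheses --- isotropy, the moment bound $\M$, and the cumulant lower bound $\Delta$ --- up to the perturbation inherited from the parent level, so that the inductive argument closes with the same constants.
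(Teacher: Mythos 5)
Your recursive framework, the use of a $\sin\Theta$ subspace-perturbation bound, and a covariance-concentration bound for bounded random vectors all match the structure of the paper's proof; that part of the plan is sound. The fatal gap is in your proposed proof of the max-gap lemma, which is indeed the heart of the paper.

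You propose to prove the max-gap bound by combining (i) Carbery--Wright anti-concentration of each pairwise difference $\lambda_i(u)-\lambda_j(u)$ with (ii) a sub-exponential bound on the range of the $\lambda_i(u)$, arguing that ``$n$ values on a range of this size cannot all cluster within $\sigma^2/\operatorname{polylog}(n)$ of each other without violating the pairwise anti-concentration.'' This deduction does not go through. Anti-concentration of pairwise differences is exactly the ingredient that controls the \emph{minimum} gap: it shows no two eigenvalues are likely to be extremely close, which is how prior FPCA-type analyses arrive at an $\Omega(1/n^2)$-minimum-gap bound and hence a polynomially larger sample complexity. It carries no information about the existence of a single \emph{large} gap: $n$ points spread evenly over an interval of length $\sigma^2\log n$ have all pairwise distances at least $\sigma^2\log n/n$, perfectly consistent with any Carbery--Wright bound, yet their maximum gap is $O(\sigma^2\log n/n)$, far below $\Delta\sigma^2/\operatorname{polylog}(n)$. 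A pigeonhole count of pairs with small differences likewise fails to force a large gap. The paper's Theorem~\ref{thm:maxgap} instead exploits the \emph{tails} of the $a_i z_i^2$ ensemble: it couples the true (one-draw-per-$a_i$) ensemble to a uniform mixture model with $O(n\log n)$ draws, conditions on the extremal draws surviving the subsample, and then shows there is an interval $[t_0,t_1]$ of width $\Theta(\min_i a_i)$ in the tail whose probability mass is so small that with constant probability it contains at most $100$ points while at least one point lies to its right, forcing a gap of $\Omega(\min_i a_i)$. This tail-based coupling argument has no analogue in your sketch and cannot be replaced by anti-concentration.

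A secondary issue: you assume the recursion tree has depth $O(\log n)$ with errors composing additively along root-to-leaf paths. The algorithm gives no control over how the max-gap splits the eigenvalues --- it could shave off one vector at a time, giving depth $\Theta(n)$. The paper instead shows the error satisfies the nonlinear recurrence $E_{i+1}\le(1+E_i^2)(\eps'+(E_i/b)^2)$ and proves (Claim~\ref{claim:recurrence}) that this stays bounded by $2\eps'$ for \emph{all} depths, because the error enters quadratically through the $\sin\Theta$ term. Your additive-over-$O(\log n)$-levels accounting needs to be replaced by this depth-independent recurrence argument.
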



This improves substantially on the previous best sample complexity.
The algorithm is a {\em recursive} variant of standard tensor decomposition. 
It proceeds by first puting the distribution in near-isotropic position, then computing the
eigenvectors of a reweighted covariance matrix; these eigenvectors are essentially the columns $A^{(i)}$. To do so
accurately with few samples requires that the eigenvalues of the random
matrix are well-spaced.  To
estimate all the eigenvectors, we need large spacings between all
$n-1$ adjacent eigenvalue pairs, i.e., we need that $\min_i
\lambda_{i+1} - \lambda_i$ should be large, and the complexity of this
method is polynomial in the inverse of the minimum gap.

The idea behind our new algorithm is very simple: instead of estimating all
all the gaps (and eigenvectors) accurately in one shot, we group the eigenvectors
according to which side of the {\em largest} gap $\max_i \lambda_{i+1} -
\lambda_i$ they fall. The vectors
\emph{inside} either of these subspaces are not necessarily close to
the desired $A^{(i)}$, but we can proceed recursively in each
subspace, {\em re-using the initial sample}. The key fact though, is that at each stage, we only need 
the maximum gap to be large and thus the number of samples needed is much
smaller. As a motivating example, if we pick $n$ random points from
$N(0,1)$, the minimum gap is about $O(1/n^2)$ while the maximum gap in
expectation is $\Omega(1/\sqrt{\log n})$. Since the sample complexity
grows as the square of the inverse of this gap, this simple idea
results in a huge saving. For the tensor approach to ICA, the complexity goes down to $\tilde{O}(n^2)$. 
To go all the way to linear, we will apply the maxgap idea and recursion to the Fourier PCA algorithm of 
\cite{GVX14}.

This paper has three components of possibly general interest: first, an ICA algorithm with nearly linear and thus
nearly optimal sample complexity; second, the use and analysis of maximum spacings of the
eigenvalues of random matrices as a tool for the design and
analysis of algorithms (typically, one tries to control the minimum,
and hence, all the gaps); and finally, our proof of the maximum
spacing uses a simple coupling technique that
allows for decoupling of rather complicated dependent processes. We note that our
algorithmic result can be applied to learning mixtures of spherical
Gaussians with linearly independent means and to ICA with Gaussian
noise where $x = As+ \eta$ and $\eta$ is from an unknown (not
necessarily spherical) Gaussian distribution. We do not treat these
extensions in detail here as they are similar to \cite{GVX14}, but with 
the improved sample complexity of the core algorithm. Our approach can also be used to improve the sample complexity of other applications of tensor methods
\citep{AnandkumarTensorDecomp}, including learning hidden Markov models and latent topic models.

\section{Outline of approach}

Given a tensor $T = \sum_{i=1}^n \alpha_i v_i \otimes v_i \otimes v_i \otimes v_i$, we consider the matrix
\[
T(u,u) = \sum_{i=1}^n \alpha_i (v_i \cdot u)^2 v_i \otimes v_i = V\diag{\alpha_i(v_i \cdot u)^2}V^T.
\] 
If $u$ is chosen randomly, then with high probability, the diagonal entries will be distinct. For a Gaussain $u$, the diagonal entries will themselves be independent Gaussians, and the minimum gap will be $O(1/n)$. However, the {\em maximum} gap will be much larger 
(inverse logarithmic), and we exploit this in the following algorithm. The same approach applies to tensors of order $k$, where we have $T(u,u,\ldots,u)$ with $k-2$ arguments and the RHS coefficients are 
$\alpha_i (v_i \cdot u)^{k-2}$. We describe the algorithm below for fourth order tensors ($k=4$), but an almost identical algorithm 
works for any $k \ge 3$, with $T(u,u)$ replaced by $T(u,\ldots,u)$ with $k-2$ copies of $u$ as arguments.
\begin{figure}[h!]
\begin{center}
\fbox{\parbox{\textwidth}{
\begin{minipage}{6in}
\vspace{0.1in}
{\bf Recursive-Decompose}($T$,  Projection matrix $P \in \R^{n
  \times \ell}$)
\begin{enumerate}
\item (Gaussian weighting) Pick a random vector $u$ from $N(0,  1)^n$.

\item (SVD) Compute the spectral decomposition of $P^T T(u,u)P$ to obtain $\{ \lambda_i \}$, $\{v_i\}$.

\item (Eigenvalue gap) Find the largest gap $\lambda_{i+1} -
  \lambda_i$. If the gap is too small, pick $u$ again. Partition the eigenvectors into $V_1 = \{v_1, \ldots, v_i \}$
  and $V_2 = \{ v_{i+1}, \ldots, v_\ell \}$.

\item (Recurse) For $j=1,2$: if $|V_j|=1$ set $W_j = V_j$, else $W_j=\textrm{Recursive-Decompose}(T, P V_1)$.

\item Return $[ W_1 \quad W_2]$.
\end{enumerate}
\end{minipage}
}}
\end{center}
\end{figure}
To apply the algorithm for a specific unsupervised learning model,  we just have to define the tensor $T$ appropriately \citep{AnandkumarTensorDecomp}.  For ICA, this is 
\[
T = \E{x \otimes x \otimes x \otimes x} - M
\]
where $M_{ijkl} = \E{x_i x_j}\E{x_kx_l} + \E{x_ix_k}\E{x_jx_l} + \E{x_i x_l}\E{x_j x_k}$.
As we show in Theorem \ref{thm:maxgap}, the maximum gap grows as $\Omega(1)$, while the 
minimum gap is $O(1/n^2)$. While this already improves the known sample complexity bounds to quadratic in the dimension, it is still $\Omega(n^2)$.

\subsection{Fourier PCA}
To achieve near-linear sample complexity for ICA, we will apply the recursive decomposition idea to the Fourier PCA approach of 
\cite{GVX14}. For a
random vector $x \in \R^n$ distributed according to $f$, the
characteristic function is given by the Fourier transform
\begin{align*}
  \phi(u)  = \E{ e^{iu^Tx}} = \int f(x) e^{iu^Tx} dx.
\end{align*}
In our case, $x$ will be the observed data in the ICA problem.
The \emph{second characteristic function} or
\emph{cumulant generating function} given by $\psi(u) = \log(
\phi(u))$. For $x = As$,
we define the component-wise characteristic functions with
respect to the underlying signal variables 
\begin{equation}\label{psi}
\phi_j(u_j) = \E{ e^{i u_j
    s_j}} \quad \mbox{ and } \quad \psi_i(u_j) = \log( \phi_j(u_j)) = \sum_{k=1}^\infty \cum_k(s_j)\frac{(iu_j)^k}{k!}.
\end{equation}
Here $\cum_k(y)$ is the $k$'th cumulant of the random variable $y$, a polynomial in its first $k$ moments (the second characteristic function is thus also called the cumulant generating function).
Note that both
these functions are with respect to the underlying random variables
$s_i$ and not the observed random variables $x_i$. For convenience, we
write $g_i = \psi_i''$.

The reweighted covariance matrix in the algorithm is
precisely the Hessian $D^2 \psi$:
\begin{align*}
D^2 \psi &= -\frac{\E{ (x- \mu_u)(x- \mu_u)^T
      e^{iu^Tx}}}{\E{ e^{iu^Tx}}} = \Sigma_u,
\end{align*}
where $\mu_u = \E{ x e^{iu^Tx}} / \E{ e^{iu^Tx}}$. This matrix $D^2
\psi$ has a very special form; suppose that $A=I_n$:
\begin{align*}
  \psi(u)  = \log \left( \E{ e^{iu^Ts}} \right)
   = \log \left( \E{ \prod_{j=1}^n e^{iu_js_j}}\right) 
   = \sum_{j=1}^n \log( \E{ e^{iu_js_j}} )
   = \sum_{j=1}^n \psi_j(u_j).
\end{align*}
Taking a derivative will leave only a single term
$ \frac{ \partial \psi}{\partial u_j} = \psi'_j(u_j)$.
And taking a second derivative will leave only the diagonal terms
\begin{align*}
  D^2 \psi = \diag{ \psi_j''(u_j)} = \diag{ g_j( u_j)}.
\end{align*}
Thus, diagonalizing this matrix will give us the columns of $A = I_n$,
provided that the eigenvalues of $D^2 \psi$ are nondegenerate. The general case for $A \neq I_n$ follows from the chain rule. The matrix $D^2 \psi$ is symmetric (with complex eigenvalues), but
not Hermitian; it has the following decomposition as observed by \cite{DBLP:journals/sigpro/Yeredor00}. The statement below holds for any nonsingular matrix $A$, we use it for unitary $A$,  since we can first place $x$ in isotropic
position so that $A$ will be effectively unitary. 

\begin{lemma}\label{lemma:firstderivative}
  Let $ x \in \R^n$ be given by an ICA model $x = As$ where $A \in
  \R^{n \times n}$ is nonsingular and $s\in \R^n$ is an
  independent random vector. Then
  \begin{align*}
    D^2 \psi =  A \diag{ g_i( (A^Tu))_i} A^T.
  \end{align*}
\end{lemma}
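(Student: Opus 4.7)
The plan is to reduce everything to the already-established diagonal case $A = I_n$ by a simple linear change of variables, then apply the chain rule. The key observation is that $x = As$ implies $u^T x = u^T A s = (A^T u)^T s$, so setting $v = A^T u$ we can rewrite the characteristic function $\phi(u) = \E{e^{iu^T x}} = \E{e^{iv^T s}}$ purely in terms of the signal $s$ with a transformed argument.

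First I would invoke the independence of the $s_j$ to factor
\[
\phi(u) = \E{\prod_{j=1}^n e^{iv_j s_j}} = \prod_{j=1}^n \E{e^{iv_j s_j}} = \prod_{j=1}^n \phi_j(v_j),
\]
and then take the logarithm to obtain
\[
\psi(u) = \sum_{j=1}^n \psi_j(v_j) = \sum_{j=1}^n \psi_j\bigl((A^T u)_j\bigr).
\]
This mirrors exactly the factorization performed in the excerpt for the identity case, with the only difference being the linear substitution $v = A^T u$.

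Next I would differentiate twice using the chain rule. Since $v_j = \sum_k A_{kj} u_k$, we have $\partial v_j/\partial u_k = A_{kj}$, so
\[
\frac{\partial \psi}{\partial u_k} = \sum_{j=1}^n \psi_j'(v_j)\, A_{kj}, \qquad \frac{\partial^2 \psi}{\partial u_k \partial u_\ell} = \sum_{j=1}^n \psi_j''(v_j)\, A_{kj} A_{\ell j}.
\]
Writing $g_j = \psi_j''$ and assembling the second-order partials into a matrix, the right-hand side is precisely the $(k,\ell)$-entry of $A\,\diag{g_j(v_j)}\,A^T$, which is the claimed identity.

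There is no real obstacle: the lemma is essentially a bookkeeping statement, and its content is that independence plus the linear change of variables $v = A^T u$ makes the Hessian of $\psi$ decompose according to the ICA mixing matrix. The only thing to check carefully is that $\psi$ is twice differentiable at $u$, which holds in a neighborhood of the origin (where $\phi$ is nonzero) and, more generally, wherever $\phi(u) \neq 0$; this is precisely the regime in which the algorithm will evaluate $D^2 \psi$.
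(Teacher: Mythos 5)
Your proof is correct and takes exactly the approach the paper indicates: the paper establishes the $A=I_n$ case by factoring $\psi$ over the independent coordinates and then simply remarks that ``the general case for $A \neq I_n$ follows from the chain rule,'' which is precisely the substitution $v = A^T u$ and double differentiation that you carry out explicitly. Your note about differentiability where $\phi(u)\neq 0$ is a harmless (and reasonable) extra remark.
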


To obtain a robust algorithm, we need the eigenvalues of $D^2 \psi$
being adequately spaced (so that the error arising from sampling does
not mix the columns of $A$). For this, \cite{GVX14} pick a random vector $u \sim N(0, \sigma^2 I_n)$, so that the
$g_i(u_i)$ are sufficiently anti-concentrated with $\sigma$ small enough and
the number of samples large enough so that with high probability for all
pairs $i\neq j$, they could guarantee 
$\abs{ g_i( (A^Tu)_i) - g_j( (A^Tu)_j)} \ge \delta$
for a suitable $\delta$, leading to a (large) polynomial complexity.

\section{Recursive Fourier PCA}
We  partition the eigenvectors into two sets according to where their
eigenvalues fall relative to the maximum gap,  
project to the two subspaces spanned by these sets and
recurse, {\em re-using the initial sample}. The parameter $\sigma$ below can be safely set according to Theorem \ref{thm:main} but  in practice we suggest starting with a constant $\sigma$ and halving it till the output of the algorithm has low error (which can be checked against a new sample).
\begin{figure}[h!]
\begin{center}
\fbox{\parbox{\textwidth}{
\begin{minipage}{6in}
\vspace{0.1in}
{\bf Recursive FPCA}($\sigma$, Projection matrix $P \in \R^{n
  \times k}$)
\begin{enumerate}


\item (Isotropy) Find an isotropic transformation $B^{-1}$ with
\[
B^2 = \frac{1}{|S|}\sum_{x \in S}P^T(x-\bar{x})(x-\bar{x})^TP.
\]
\item (Fourier weights) Pick a random vector $u$ from $N(0, \sigma^2
  I_n)$. For every $x$ in the sample $S$, compute $y = B^{-1}P^Tx$, and
  its Fourier weight
\[
w(y) = \frac{e^{iu^T x }}{\sum_{x \in S} e^{iu^T x }}.
\] 
\item (Reweighted Covariance) Compute the covariance matrix of the points $y$ reweighted by $w(y)$
\[
\mu_u = \frac{1}{\abs{S}}\sum_{y \in S} w(y) y \quad \mbox{ and }
\quad \Sigma_u = -\frac{1}{\abs{S}}\sum_{y \in
  S}w(y)(y-\mu_u)(y-\mu_u)^T.
\]
\item (SVD) Compute the spectral decomposition $\{ \lambda_i \}$, $\{v_i\}$
  of $\textrm{Re}(\Sigma_u)$.

\item (Eigenvalue gap) Find the largest gap $\lambda_{i+1} -
  \lambda_i$. If the gap is too small, pick $u$ again. Partition the eigenvectors into $V_1 = \{v_1, \ldots, v_i \}$
  and $V_2 = \{ v_{i+1}, \ldots, v_k \}$.

\item (Recursion) For $j=1,2$: if $|V_j| = 1$ set $W_j = V_j$, else $W_j=\textrm{Recursive FPCA}(\sigma, P
  V_j)$.

\item Return $[ W_1 \quad W_2]$.
\end{enumerate}
\end{minipage}
}}
\end{center}
\end{figure}
The recursive decomposition step can be carried out simultaneously for all the decomposed blocks. In other words, viewing the decomposition as a tree, the next step can be carried for all nodes at the same level, with a single SVD, and a single vector $u \sim N(0,1)^n$. This is the same as doing each block separately with the vector $u$ 
projected to the span of the block. 

\section{Analysis}

The analysis of the recursive algorithm has three parts. We will show that: 
\begin{enumerate}
\item There is a large gap in the set if diagonal values, i.e.,
the set $\{g_i((A^Tu)_i)\}$. Since we make the distribution isotropic, $A^Tu$ has the same distribution as $u$, so we can focus on 
$g_i$ evaluated at independent Gaussians.
\item There is a partition of the columns of $A$ into two subsets whose spans are $V$ and $\bar{V}$, so that the two subspaces obtained in the algorithm as the span of all eigenvectors above the largest gap and below this gap are close to $V,\bar{V}$.  This will follow using a version of Wedin's theorem for perturbations of matrices.
\item The total error accumulated by recursion remains below the target error $\eps$ for each column.
\end{enumerate}

\subsection{Maximum spacings of Gaussian polynomials}
Here we study the largest
gap between successive eigenvalues of the matrix $D^2 \log( \phi(u))$.
For a set of real numbers $x_1, \ldots, x_n$, define the maximum
gap function as:
\begin{align*}
  \maxgap{x_1, \ldots, x_n} = \max_{i \in [n]} \min_{j \in [n]: x_j
    \ge x_i} x_j - x_i 
\end{align*}
The $\textrm{maxgap}$ function is simply the largest gap between two
successive elements in sorted order.
\begin{theorem}\label{thm:maxgap}
  Let $\{ p_1(x), \ldots, p_n(x) \}$ be a set of $n$ quadratic
  polynomials of the form $p_i(x) = a_i x^2$ where $a_i >
  0$ for all $i$ and $\{z_1, \ldots, z_n\}$ be iid standard
  Gaussians. 
Then, with probability at least
  $1/(2000 \log^2 n)$,
  \begin{align*}
    \maxgap{ p_{1}(z_{1}), \ldots,  p_{n}(z_{n})} \ge
    \frac{1}{50} \min_i a_i.
  \end{align*}
  \end{theorem}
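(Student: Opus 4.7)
The plan is to produce a ``cut point'' $s^*$ such that, with the claimed probability, the interval $[s^*, s^* + a/50]$ contains none of the values $Y_i := p_i(z_i) = a_i z_i^2$, while at least one $Y_i$ lies strictly above it. This would immediately force a gap of at least $a/50$ in the sorted order of the $Y_i$'s, and hence a maxgap of at least $a/50$.

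Let $a = \min_i a_i$ and $g = a/50$. The $Y_i$'s are mutually independent since the $z_i$'s are. I would first pick $s^* > 0$ such that $\sum_i p_i = 1$, where $p_i := \Pr(Y_i > s^*)$; such $s^*$ exists by monotonicity and continuity of the tail-sum. Note $s^* > a$ for $n$ large, since $\sum_i \Pr(Y_i > a) \ge n\, \Pr(z_1^2 > 1) \gg 1$. Then consider the event
\[
E \;=\; \bigl\{\text{exactly one } j \text{ satisfies } Y_j > s^*\bigr\} \cap \bigl\{\max_j Y_j > s^* + g\bigr\}.
\]
Under $E$, the second-largest $Y_i$ is at most $s^*$ while the largest exceeds $s^* + g$, so the top gap is at least $g = a/50$.

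By independence,
\[
\Pr(E) \;=\; \sum_j \Pr(Y_j > s^* + g)\prod_{i \ne j}(1 - p_i) \;=\; \prod_i (1 - p_i) \cdot \sum_j \frac{p_j - \delta_j}{1 - p_j},
\]
where $\delta_j := \Pr(Y_j \in (s^*, s^* + g])$. I would lower-bound each factor separately: since $f_{Y_j}$ is monotonically decreasing on $(0, \infty)$, $\delta_j \le g\, f_{Y_j}(s^*)$, and the Mills-type bound $h_{\chi_1^2}(v) \le (v+1)/(2v)$ on the hazard of $\chi_1^2$ yields $\delta_j / p_j \le g/a_j + g/(2 s^*)$, which is at most $1/25$ after using $s^* \ge a$, $g = a/50$, and $a_j \ge a$. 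Summing gives $\sum_j (p_j - \delta_j)/(1-p_j) \ge 24/25$, and $\prod_i (1 - p_i) \ge e^{-2}$ from $\sum p_i = 1$ with $p_i \le 1$. Combining, $\Pr(E) = \Omega(1)$, comfortably stronger than the claimed $1/(2000 \log^2 n)$.

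The one nontrivial obstacle is that the hazard of $\chi_1^2$ blows up near zero, so if some $a_j$ is much larger than $s^*$ (i.e., $s^*/a_j \to 0^+$), the naive hazard bound degrades. The Mills bound $(v+1)/(2v)$ together with $s^* \ge a$ already tames this; if extra care is needed, a short case split on whether the budget $\sum p_i = 1$ is carried by ``tail'' indices (with $s^*/a_i$ large and hazard $\sim 1/2$) or by ``bulk'' indices (with $s^*/a_i$ small and $p_i$ close to $1$) should suffice. In the bulk-dominated case, the top $Y_i$'s are automatically well-separated on their scale $a_i \ge a$, so the required gap is essentially free.
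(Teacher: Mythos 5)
Your route is genuinely different from the paper's and, modulo one repair, it works --- in fact it proves something stronger. The paper does not argue directly with the independent, non-identically distributed values $a_iz_i^2$; instead it couples them to $m = 10n\log n$ i.i.d.\ draws from the uniform mixture $F = \frac1n\sum_i f_i$ via a bucketing argument, conditions on the argmax and argmin of the mixture sample surviving the subsampling (this is where the $1/(1600\log^2 n)$ loss comes from), and then exhibits a gap for the mixture using a threshold $t_0$ with tail mass $1/(n\log n)$ and a Gaussian tail-ratio computation. Your argument skips the coupling entirely: the cut point $s^*$ calibrated by $\sum_i \Pr(Y_i > s^*) = 1$, the exact disjoint-union formula for $\Pr(E)$, and the hazard-rate bound $h_{\chi^2_1}(v) \le (v+1)/(2v)$ (which is valid for all $v>0$ via the lower Mills bound $\Pr(Z>x)\ge \tfrac{x}{x^2+1}\phi(x)$, and which combined with $s^*\ge a$ and $a_j \ge a$ does give $\delta_j/p_j \le 3/100$) yield a \emph{constant} success probability rather than $1/(2000\log^2 n)$. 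That is a cleaner and quantitatively better result than the paper's.

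The one step that is wrong as written is ``$\prod_i(1-p_i) \ge e^{-2}$ from $\sum_i p_i = 1$ with $p_i \le 1$.'' This does not follow: if $p_1 = 1-\eta$ and the remaining $p_i$ sum to $\eta$ (which happens when one $a_j$ is enormously larger than the rest, forcing $p_j$ near $1$), then $\prod_i(1-p_i) \le \eta$ can be arbitrarily small. The inequality $1-p \ge e^{-2p}$ only holds for $p \le 1/2$. The repair is the case split you allude to, and it is short: since $\sum_i p_i = 1$, at most one index $j_0$ can have $p_{j_0} > 1/2$. If no such index exists, all $p_i \le 1/2$ and $\prod_i(1-p_i) \ge e^{-2\sum_i p_i} = e^{-2}$, so your bound $\Pr(E) \ge \tfrac{24}{25}e^{-2}$ stands. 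If $p_{j_0} > 1/2$, keep only the $j = j_0$ term of the disjoint union: $\Pr(E) \ge (p_{j_0}-\delta_{j_0})\prod_{i\ne j_0}(1-p_i) \ge \tfrac{24}{25}\cdot\tfrac12\cdot e^{-2\sum_{i \ne j_0} p_i} \ge \tfrac{12}{25}e^{-1}$, since the product conveniently omits the one dangerous factor and $\sum_{i\ne j_0}p_i < 1/2$. Note that the obstacle you flagged (the hazard blowing up as $s^*/a_j \to 0$) is already handled by your Mills bound plus $s^* \ge a$; the product over $(1-p_i)$ is the separate issue that actually needs the case split. With that two-line fix, your proof is complete.
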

 We can simply repeat the experiment $O(\log^3 n)$ times to obtain a
  high probability guarantee. This type of $\textrm{maxgap}$ function
  has been somewhat studied in the mathematics literature -- there are
  a number of asymptotic results \cite{deheuvels1985limiting,
    deheuvels1984strong, deheuvels1986influence}. The rough intuition
  of these results is that asymptotically, the $\textrm{maxgap}$
  depends only on the tails of the random variables in question. Our
  work differs from these results in two very important ways --
  firstly, our results are quantitative (i.e., not simply in the limit
  of $n \to \infty$), and secondly, our result is true even if you
  pick the polynomial after fixing $n$. The latter, in particular,
  makes the problem quite a bit harder as now the family of random
  variables is no longer even uniform over $n$.
  \begin{proof}
  The first stage of the proof is to reduce the problem from the
  random model $\{ a_1 z_1^2, \ldots, a_n z_n^2 \}$ to sampling from a mixture model,
  which will more easily allow us to analyse the maximum gaps. To this
  end, let $f_i$ denote the distribution of $p_i(z_i)$, then consider
  the following uniform mixture model: 
\[
F(x) = \frac{1}{n} \sum_{i=1}^n f_i(x) = \frac{1}{n} \sum_{i=1}^n \Pr(a_iz_i^2 = x).
\] 
One
  can think of the simulation of a sample $x \sim F$ as a two-stage
  process. First, we pick an $i \in [n]$ uniformly at random (this
  gives a corresponding $a_i$), and then we pick $z \sim N(0,1)$
  independently. The product $a_i z^2$ then has distribution given by
  $F$.

  Suppose we pick $m = 10 n \log(n)$ samples as follows: first
  we pick $m$ times independently, uniformly at random from $[n]$
  (with replacement) to obtain the set $Y = \{y_1, \ldots,y_m
  \}$; then we pick $m$ independent standard Gaussian random variables
  $\{z_1, \ldots, z_m \}$, and finally compute component-wise products
  $\{ y_1 z_1^2, \ldots, y_m z_m^2 \}$.  Let $Y_1, \ldots, Y_n$ be a partition of $Y$ according to which $a_i$ is assigned to each $y_j$, i.e., $Y_i$ is the set of $y_j$'s for which $a_i$ was chosen. 

The following bounds follow from standard Chernoff-Hoeffding bounds.  For i.i.d.
    Bernoulli $\{0,1\}$ random variables with bias $p$:
    \begin{align}\label{eqn:chernoffagain}
      \prob{ \frac{1}{m} \sum_{i=1}^m X_i \ge (1+\delta) pm } \le
      \expn{ - \frac{\delta^2 pm}{3}}
    \end{align}
  \begin{claim}\label{claim:conc}
$\prob{\exists i \, : \,  |Y_i| = 0} \le \frac{1}{n^9} \quad \mbox{ and } \quad \prob{\exists i \, :\, |Y_i| > 40\log n} \le \frac{1}{n^2}$.
  \end{claim}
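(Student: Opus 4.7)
The plan is a direct balls-into-bins calculation. Observe that $|Y_i|$ is the number of the $m = 10 n \log n$ draws that land in slot $i$, so $|Y_i|$ is distributed as a sum of $m$ i.i.d.\ Bernoulli$(1/n)$ variables; in particular $\E{|Y_i|} = m/n = 10 \log n$. Both tails of the claim will follow by bounding one $|Y_i|$ and then union-bounding over the $n$ indices.

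For the first bound, the probability that a given slot $i$ is never chosen is exactly $(1 - 1/n)^m \le \exp(-m/n) = \exp(-10 \log n) = n^{-10}$. A union bound over $i \in [n]$ gives $\Pr[\exists i : |Y_i| = 0] \le n^{-9}$, as required.

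For the second bound I would apply the Chernoff inequality (\ref{eqn:chernoffagain}) with $p = 1/n$ and $\delta = 3$, so that $(1+\delta) p m = 4 \cdot 10 \log n = 40 \log n$. Then
\[
\Pr[|Y_i| > 40 \log n] \le \exp\!\left(-\frac{9 \cdot 10 \log n}{3}\right) = \exp(-30 \log n) = n^{-30},
\]
and union-bounding over $i \in [n]$ yields $\Pr[\exists i : |Y_i| > 40 \log n] \le n^{-29}$, which is far below the claimed $n^{-2}$.

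There is no real obstacle here: both statements are textbook Chernoff plus union bound, and the constants $10$ in $m = 10 n \log n$ and $40$ in $40 \log n$ are chosen exactly so that the single-bin bounds lose at most a factor $n$ under the union bound and still beat the target failure probabilities. The only thing worth flagging is that the Bernoulli variables inside a single $|Y_i|$ are independent across the $m$ draws (sampling with replacement), so (\ref{eqn:chernoffagain}) applies as stated; the coordinates $|Y_1|, \ldots, |Y_n|$ are of course not independent of each other, but that is irrelevant since we only use a union bound over them.
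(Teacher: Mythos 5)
Your argument is correct and coincides with the paper's own (commented-out) proof: Chernoff plus union bound for the upper tail, and the exact probability $(1-1/n)^m \le e^{-m/n} = n^{-10}$ plus union bound for the empty-bin event. The only caveat worth flagging is that the multiplicative Chernoff form $e^{-\delta^2 \mu/3}$ in (\ref{eqn:chernoffagain}) is actually only valid for $0 < \delta \le 1$; with $\delta = 3$ the tight exponent is $-\mu\bigl((1+\delta)\ln(1+\delta)-\delta\bigr) \approx -2.55\mu$, giving about $n^{-25}$ per bin rather than your $n^{-30}$, but this imprecision is shared by the paper and is harmless given the vast slack over the required $n^{-2}$.
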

\hide{  
\begin{proof}
   The proof of (1) can be found in \cite{motwani2010randomized}. The
    proof of (2) follows from the trivial union bound over all $i$,
    and the following standard form of the Chernoff bound
    Thus, if we sum the indicator random variables $\chi_{y_i = a_1}$
    over all $i \in [m]$, then we take $pm = 10 \log(n)$ and $\delta =
    3$ which yields a probability bound of $1/n^3$. Now union bounding
    over all $a_j$ yields the desired answer.
  \end{proof}
}

  We now assume the above two events do not occur which happens with probability at least 
$2/n^2$. Next, we draw a
  subsample of size $n$ from the set $Y = \{ y_1 z_1^2, \ldots, y_m z_m^2
  \}$ to form the set $S$. To do so, we simply 
  pick a single representative uniformly at random from each $Y_i$. From the claim
  above, we know that each bucket has at least one element, and at
  most $ 40 \log(n)$ elements in it. The set $W$ is the set of values 
  $y_i z_i^2$ associated with the $n$ representatives we
  picked uniformly at random.  A simple observation is that $W$ is
  distributed exactly as the $\{ p_1(z_1), \ldots, p_n(z_n)\}$ in the
  statement of this theorem. In fact, each $a_i$ shows up exactly once
  in $W$, and is multiplied by $z^2$ for $z \sim N(0,1)$, and all these random
  variables are independent.

  Next, we condition on the event that $\arg\max_{i} y_i z_i^2$ and $\arg\min_{i} y_i
  z_i^2$ are picked in $W$. This occurs with probability at
  least $1/1600 \log(n)^2$ since no bucket is of size greater than $40
  \log(n)$ by Claim \ref{claim:conc}. With this assumption, it is clear that $\maxgap{W}
  \ge \maxgap{y_1, \ldots, y_m}$.  Thus, it suffices for us to analyse $\maxgap{y_1 z_1^2,
    \ldots, y_m z_m^2}$. Since the latter random variable is independent
  of which $y_j$ are picked for $W$, we have a reduction
  from our original random variable model $\{ a_1 z_1^2, \ldots, a_m
  z_m^2 \}$ to (slightly more) samples from a mixture model $F$.

  To lower bound the maximum gap, observe that the density $F(x)$ is
  continuous, has its maximum at $x=0$ and monotonically decays to 0
  as $x \to \infty$, since this is true for each of the component
  distributions $f_i$.  We will now pick thresholds $t_0, t_1$ such
  that $t_1-t_0$ is large, and there is good probability that no
  element of $W$ takes its value in the interval $[t_0, t_1]$ and at
  least one element of $W$ takes its value to the right of $t_1$. We
  pick $t_0$ s.t.
  \begin{align*}
    \prob{x \ge t_0} = \frac{1}{n \log(n)}
  \end{align*}
  and $t_1 = t_0 + 2\min_i a_i  $. Let $a_1 = \min_i a_i$. 
With these settings, we have,
  \begin{align*}
    \frac{ \prob{x \ge t_1}}{\prob{x \ge t_0}} = \frac{\sum_{i=1}^n\prob{a_i z_i^2 \ge t_1}}{\sum_{i=1}^n\prob{a_i z_i^2 \ge t_0}}\ge \min_i
    \frac{\prob{a_i z_i^2 \ge t_1}}{\prob{a_i z_i^2 \ge t_0}} =
    \frac{\prob{a_1 z_1^2 \ge t_1}}{\prob{a_1 z_1^2 \ge t_0}} = \frac{
      \prob{z_1 \ge \sqrt{t_1 / a_1}}}{\prob{z_1 \ge \sqrt{t_0 / a_1}}}
  \end{align*}
  where the second inequality follows by simply expanding the
  densities explicitly.
  
  Next, we will use the following standard Gaussian tail bound \cite{Feller}.
\begin{fact}
  For $z$ drawn from $N(0,1)$, and $t \in \R$,
\begin{align*}
  \left(\frac{1}{x}- \frac{1}{x^3}\right)e^{-x^2/2} \le \sqrt{2\pi}
  \Pr(z \ge x) \le \frac{1}{x} e^{-x^2/2}.
\end{align*}
\end{fact}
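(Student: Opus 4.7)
The statement is the classical two-term bound on the Gaussian Mills ratio. Although written with $t \in \R$, the displayed inequalities only carry content for $x > 0$ (and the lower bound is only nontrivial for $x > 1$); the claim should be read in that sense. My plan is to derive both inequalities from the integral representation $\sqrt{2\pi}\,\Pr(z \ge x) = \int_x^\infty e^{-t^2/2}\,dt$, using a one-line comparison for the upper bound and iterated integration by parts for the lower bound.

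For the upper bound I would multiply the integrand by $t/x \ge 1$, which is valid on $[x,\infty)$, and then use that $-e^{-t^2/2}$ is an antiderivative of $t\,e^{-t^2/2}$:
\begin{align*}
\int_x^\infty e^{-t^2/2}\,dt \;\le\; \int_x^\infty \frac{t}{x}\,e^{-t^2/2}\,dt \;=\; \frac{e^{-x^2/2}}{x}.
\end{align*}

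For the lower bound I would write $e^{-t^2/2} = (1/t)\bigl(t\,e^{-t^2/2}\bigr)$ and integrate by parts with $u = 1/t$, $dv = t\,e^{-t^2/2}\,dt$, giving $du = -t^{-2}\,dt$ and $v = -e^{-t^2/2}$. The boundary term at infinity vanishes, leaving
\begin{align*}
\int_x^\infty e^{-t^2/2}\,dt \;=\; \frac{e^{-x^2/2}}{x} \;-\; \int_x^\infty \frac{e^{-t^2/2}}{t^2}\,dt.
\end{align*}
A second integration by parts on the remainder, now with $u = 1/t^3$ and $dv = t\,e^{-t^2/2}\,dt$, produces
\begin{align*}
\int_x^\infty \frac{e^{-t^2/2}}{t^2}\,dt \;=\; \frac{e^{-x^2/2}}{x^3} \;-\; 3\int_x^\infty \frac{e^{-t^2/2}}{t^4}\,dt \;\le\; \frac{e^{-x^2/2}}{x^3},
\end{align*}
since the omitted integral is nonnegative. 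Substituting back yields the claimed lower bound $(1/x - 1/x^3)\,e^{-x^2/2}$.

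There is no real obstacle here: the entire argument is calculus. The only points that require care are verifying that the boundary contributions at $t=\infty$ vanish in both applications of integration by parts (they do, since $e^{-t^2/2}/t^k \to 0$ for every $k$), and tracking the sign of the final remainder so that discarding it yields a one-sided inequality in the correct direction. Extending the expansion by further integrations by parts would give the standard asymptotic series, but two terms suffice for what is needed in the rest of the proof of Theorem \ref{thm:maxgap}.
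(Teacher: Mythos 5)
The paper states this as a known fact, citing Feller, and gives no proof of its own, so there is no in-paper argument to compare against. Your derivation---upper bound by the $t/x \ge 1$ comparison, lower bound by two rounds of integration by parts with the remainder discarded by sign---is correct and is precisely the standard textbook proof (essentially Feller's); the only caveat, which you already note, is that although the fact is stated for $t \in \R$, the displayed inequalities carry content only for $x > 0$, with the lower bound nontrivial only for $x > 1$.
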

From this, we have that $t_0 \ge 2a_1 \ln n$ and 
\begin{align*}
    \frac{ \prob{x \ge t_1}}{\prob{x \ge t_0}} \ge \frac{\sqrt{\frac{a_1}{t_1}}\left(1-\frac{a_1}{t_1}\right)}{\sqrt{\frac{a_1}{t_0}}} \cdot \frac{\exp(-t_1 / 2a_1)
    }{\exp(-t_0 / 2a_1)} \ge \frac{1}{2e}.
\end{align*}
Thus, $\prob{x \ge t_1} \ge 1/(2en \log(n))$.

To conclude, observe that with small constant probability,
there are at most $100$ points $a_i z_i^2$ inside the interval $[t_0,
t_1]$, and there exists at least one point to the right of the
interval. Thus, there must exist one spacing which is at least $2a_1 /
100$. The failure probability is dominated by $1-1/(1600 \log^2 n)$ as the other terms
are of lower order, hence we can bound the failure probability by $1- 1/(2000
\log^2 n)$.
\end{proof}
For higher order monomials, a slight modification to this
argument yields:
\begin{theorem}
  Let $\{ p_1(x), \ldots, p_n(x) \}$ be a set of $n$ degree $d$
  polynomials of the form $p_i(x) = a_i x^d$ where $a_i > 0$ for all
  $i$ and $\{z_1, \ldots, z_n\}$ be iid standard Gaussians. Then, with
  probability at least $1/(2000 \log^2 n)$,
  \begin{align*}
    \maxgap{ p_{1}(z_{1}), \ldots,  p_{n}(z_{n})} \ge
    \frac{d}{50} \min_i a_i^{\frac{2}{d}} \log(n)^{\frac{1}{2} - \frac{1}{d}}.
  \end{align*}
\end{theorem}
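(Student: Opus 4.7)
The plan is to follow the proof of Theorem \ref{thm:maxgap} (the $d=2$ case) step for step, adapting only the Gaussian tail computation and the choice of thresholds to an arbitrary degree $d$. The reduction to a uniform mixture is identical: let $F(x) = (1/n)\sum_{i} f_i(x)$ with $f_i$ the density of $a_i z_i^d$, draw $m = 10 n \log n$ samples from $F$, and form buckets $Y_1, \ldots, Y_n$ indexed by which $a_i$ was chosen at each draw. Claim \ref{claim:conc} applies verbatim, so with probability at least $1 - 2/n^2$ every bucket is nonempty and of size at most $40 \log n$. Picking one representative per bucket uniformly at random produces a set $W$ distributed exactly as $\{a_i z_i^d\}_{i=1}^n$; conditioning on the global extremes of $Y$ being picked in $W$ (probability at least $1/(1600 \log^2 n)$) gives $\maxgap{W} \ge \maxgap{Y}$, so it suffices to lower bound $\maxgap{Y}$ for the $m$ mixture samples.

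For the threshold choice, write $a_1 = \min_i a_i$ and use the Gaussian tail bound
\[
\prob{a_i z_i^d \ge t} \asymp (a_i/t)^{1/d}\exp\!\left(-\frac{1}{2}(t/a_i)^{2/d}\right),
\]
so that $\prob{x \ge t_0} = 1/(n \log n)$ is achieved at $t_0 \approx a_1 (2 \log n)^{d/2}$. Set $t_1 = t_0 + g$ with $g = c\, d\, a_1^{2/d}(\log n)^{1/2 - 1/d}$ for an absolute constant $c$ fixed below. A first-order Taylor expansion of $(1+x)^{2/d}$ about $x = 0$ yields
\[
(t_1/a_1)^{2/d} - (t_0/a_1)^{2/d} \approx \frac{2}{d} \cdot \frac{g}{a_1} \cdot (2 \log n)^{1 - d/2} = O(1),
\]
so $\prob{x \ge t_1}/\prob{x \ge t_0} \ge 1/(2e)$ and hence $\prob{x \ge t_1} \ge 1/(2en \log n)$, exactly as in the quadratic case. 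The remainder is then identical: by Markov's inequality, with constant probability at most $100$ of the $m$ samples land in $[t_0,t_1]$ while at least one lies above $t_1$, and pigeonhole yields an adjacent gap of size $\ge g/100$, which equals $(d/50)\min_i a_i^{2/d}(\log n)^{1/2 - 1/d}$ for the right choice of $c$.

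The step I expect to be most delicate is the Taylor bound on $(t_1/a_1)^{2/d} - (t_0/a_1)^{2/d}$. Unlike the quadratic case, where this quantity is exactly $g/a_1$, for general $d$ I must simultaneously control the first-order remainder (which requires $g \ll t_0$, i.e., $g \ll a_1 (\log n)^{d/2}$, an easy check in the regime of interest) and verify that the polynomial prefactor $(a_i/t)^{1/d}$ changes by at most a bounded factor between $t_0$ and $t_1$. For odd $d$, one additionally notes $\prob{a_i z_i^d \ge t} = \prob{z_i \ge (t/a_i)^{1/d}}$ for $t > 0$ by symmetry, so the upper-tail computation is unchanged. Aggregating the bucket-size bound, the min/max conditioning on $W$, and the two Markov-style bounds produces the stated $1/(2000 \log^2 n)$ success probability exactly as in Theorem \ref{thm:maxgap}.
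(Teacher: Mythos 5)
Your strategy is exactly the paper's: reduce to a uniform mixture, bucket the $m = 10n\log n$ samples, pick one representative per bucket, condition on the extremes surviving, and then lower-bound $\maxgap{Y}$ by a two-threshold argument plus Markov and pigeonhole. You also correctly identify the Taylor step as the delicate one. Unfortunately, that is precisely where the argument breaks, and the breakage is not a minor bookkeeping issue.

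Plug your choice $g = c\,d\,a_1^{2/d}(\log n)^{1/2-1/d}$ into your own first-order expansion $\frac{2}{d}\cdot\frac{g}{a_1}\cdot(2\log n)^{1 - d/2}$. You get
\[
2^{2 - d/2}\,c\, a_1^{\frac{2}{d} - 1}\,(\log n)^{\frac{3}{2} - \frac{1}{d} - \frac{d}{2}}.
\]
This is a universal constant only when $d = 2$ (both exponents vanish). For $d \ge 3$ the exponent on $a_1$ is negative and the exponent on $\log n$ is negative, so whether the increment is $O(1)$ depends on how $a_1$ scales with $n$. If $a_1$ is, say, $(\log n)^{-3}$ with $d=4$, your increment is $c\,(\log n)^{3/2}\cdot(\log n)^{-3/4} = c\,(\log n)^{3/4} \to \infty$: $t_1$ overshoots, $\prob{x \ge t_1}$ collapses far below $1/(2en\log n)$, and the "at least one sample above $t_1$" step fails. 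You cannot fix this by tuning the absolute constant $c$, because the offending factor depends on $a_1$ and $n$.

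The underlying issue is that the stated bound in the theorem does not match what the proof technique actually yields. If you instead solve the constraint $(t_1/a_1)^{2/d} - (t_0/a_1)^{2/d} = O(1)$ for $g$, writing $u_0 = (t_0/a_1)^{1/d} \ge \sqrt{2\log n}$, you get $g \approx d\,a_1\,u_0^{d-2} \ge d\,a_1\,(2\log n)^{(d-2)/2}$, and the method gives $\maxgap \ge \frac{c\,d}{100}\,a_1(\log n)^{(d-2)/2}$. This agrees with the displayed theorem at $d=2$, but for $d \ge 3$ the exponent $\frac{d-2}{2}$ differs from the stated $\frac{1}{2} - \frac{1}{d}$, and $a_1$ appears to the first power rather than $a_1^{2/d}$. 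A sanity check with equal coefficients $a_i \equiv a_1$ confirms the first form: the gap between the top two order statistics of $\{a_1 z_i^d\}$, by the chain rule applied to the $\Theta(1/\sqrt{\log n})$ Gaussian gap, is $\Theta(d\,a_1(\log n)^{(d-2)/2})$, and for $a_1 = (\log n)^{-3}$, $d = 4$ this is $\Theta((\log n)^{-2})$, which is strictly below the theorem's claimed lower bound of $\Theta((\log n)^{-5/4})$. So you should not treat the stated right-hand side as the target: derive $g$ from the ratio constraint, as above, and you will get a correct (and, for $d \ge 3$, different) formula.

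Two smaller points. First, $t_0 \approx a_1(2\log n)^{d/2}$ is a lower bound, not an identity: the mixture tail is dominated by the largest $a_i$, so $t_0$ may be larger; happily, $g$ is increasing in $u_0$ (one checks $h(u) = (u^2+2)^{d/2} - u^d$ has $h' \ge 0$), so using the lower bound $u_0 \ge \sqrt{2\log n}$ is the right move and should be stated explicitly. Second, your note about odd $d$ is correct but needs one more sentence: for odd $d$ the distribution is symmetric, so one simply works with the upper tail $t > 0$ and the argument is unchanged.
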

\hide{
For the linear case
$p_i(x) = a_i x$ (effectively Gaussians with different variances), we
can give a high probability bound (with a slight decrease in the gap):
\begin{theorem}
  Let $\{ p_1(x), \ldots, p_n(x) \}$ be a set of $n$ degree
  polynomials of the form $p_i(x) = a_i x$ where $a_i > 0$ for all
  $i$ and $\{z_1, \ldots, z_n\}$ be iid standard Gaussians. Then, with
  probability at least $1 - 1/n^2$:
  \begin{align*}
    \maxgap{ p_{1}(z_{1}), \ldots,  p_{n}(z_{n})} \ge
    \frac{1}{50 \log(n)^2} \min_i a_i
  \end{align*}
\end{theorem}
The proof of the latter is straightforward in comparison to Theorem
\ref{thm:maxgap}: one can skip the coupling argument and simply pick $t_0 = \sqrt{2 \log(n /
  \log(n)^{3/2})}$ and $t_1 = \sqrt{2 \log(n / \log(n))}$ and proceed
with estimating the tail probabilities.
}
\subsection{Sample complexity and error analysis}
The analysis of the algorithm uses a version of the $\sin
\theta$ theorem of \cite{davis1970rotation}. Roughly
speaking, the largest eigenvalue gap controls the magnitude of the
error in each subspace $V_1$ and $V_2$ in the algorithm, each
recursive step subsequently accumulates error accordingly, and we have
to solve a nonlinear recurrence to bound the total error.
\inshort{In the full proof, we will use Wedin's theorem, Taylor's theorem with remainder,
and a concentration bound of Vershynin. Here we outline the main steps.}
\infull{We will use the following theorems in the proof.
The first is a  form of Wedin's Theorem from
\cite{stewart1990matrix}.
  \begin{theorem}[\cite{stewart1990matrix}]\label{thm:Wedin}
    Let $A, E \in \C^{m \times n}$ be complex matrices with $m \geq n$. Let $A$ have
singular value decomposition
\begin{align*}
A = [ U_1 U_2 U_3] \left( \begin{array}{cc} 
    \Sigma_1 & 0 \\
    0 & \Sigma_2 \\
    0 & 0 \\
  \end{array} \right) [ V_1^\ast V_2^\ast]
\end{align*}
and similarly for $\tilde{A}=A+E$ (with conformal decomposition using $\tilde{U}_1,
\tilde{\Sigma}_1$ etc).  Suppose there are numbers $\alpha, \beta >
0$ such that 
\[
\min \sigma( \tilde{\Sigma}_1) \ge \alpha + \beta \quad \mbox{ and } \quad
\max \sigma( \Sigma_2 ) \le \alpha.
\]
Then,
\begin{align*}
  \norm{ \sin (\Phi)}_2 , \norm{ \sin( \Theta)}_2 \le
 \frac{\norm{E}_2}{\beta}
\end{align*}
where $\Phi$ is the(diagonal) matrix of canonical angles between the ranges of $U_1$ and
$\tilde{U}_1$ and $\Theta$ denotes the matrix of canonical angles between the 
ranges of $U_2$ and $\tilde{U}_2$.
\end{theorem}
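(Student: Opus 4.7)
The plan is to follow the classical Sylvester-equation route that underlies all $\sin \Theta$-type theorems. First I would use the SVDs to write $A = U_1 \Sigma_1 V_1^* + U_2 \Sigma_2 V_2^*$ and invoke the perturbed defining relations $\tilde A \tilde V_1 = \tilde U_1 \tilde \Sigma_1$ and $\tilde A^* \tilde U_1 = \tilde V_1 \tilde \Sigma_1$. Substituting $\tilde A = A + E$, left-multiplying by $U_2^*$ and $V_2^*$ respectively, and using the orthogonality identities $U_2^* A = \Sigma_2 V_2^*$ and $V_2^* A^* = \Sigma_2 U_2^*$, I obtain the two coupled residual identities
\begin{align*}
U_2^* \tilde U_1 \tilde \Sigma_1 - \Sigma_2 V_2^* \tilde V_1 &= U_2^* E \tilde V_1, \\
V_2^* \tilde V_1 \tilde \Sigma_1 - \Sigma_2 U_2^* \tilde U_1 &= V_2^* E^* \tilde U_1.
\end{align*}

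Setting $X = U_2^* \tilde U_1$ and $Y = V_2^* \tilde V_1$ and stacking $X$ on top of $Y$ into a block column $Z$, this pair becomes the single block Sylvester equation $Z \tilde \Sigma_1 - \Lambda Z = R$, where $\Lambda$ is the $2 \times 2$ block matrix with $\Sigma_2$ on the anti-diagonal and zero on the diagonal and $R$ stacks the two right-hand sides. The crucial observation is that the singular values of $\Lambda$ coincide with those of $\Sigma_2$, so under the hypothesis $\min \sigma(\tilde \Sigma_1) \ge \alpha + \beta$ and $\max \sigma(\Sigma_2) \le \alpha$ the spectra of $\tilde \Sigma_1$ and $\Lambda$ remain separated by at least $\beta$. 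The standard resolvent-calculus bound for Sylvester operators with gapped spectra then yields $\|Z\|_2 \le \|R\|_2 / \beta$, and in particular $\|X\|_2, \|Y\|_2 \le \|R\|_2/\beta$.

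To convert these operator-norm estimates into the canonical-angle bounds stated in the theorem, I would apply the standard identification: if $U_1$ and $\tilde U_1$ are orthonormal column bases and $U_2$ is an orthonormal basis for the orthogonal complement of the range of $U_1$, then the nonzero singular values of $U_2^* \tilde U_1$ are exactly the sines of the canonical angles, so $\|\sin \Phi\|_2 = \|X\|_2$, and analogously $\|\sin \Theta\|_2 = \|Y\|_2$.

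The main obstacle I anticipate is recovering the sharp constant $1/\beta$ individually for each of $\|X\|_2$ and $\|Y\|_2$: naive decoupling by adding and subtracting the two residual equations gives only $\|X\|_2, \|Y\|_2 \le 2\|E\|_2/\beta$, and the block-Sylvester argument above yields only $\|X\|_2, \|Y\|_2 \le \sqrt{2}\,\|E\|_2/\beta$ since $\|R\|_2 \le \sqrt{2}\,\|E\|_2$. Removing the extra constant requires a more careful variational argument that exploits the fact that the two residual blocks of $R$ are controlled by the \emph{same} perturbation $E$ rather than by two independent quantities; this sharpening is the technical heart of Stewart's and Wedin's original analyses, and I would cite it rather than redo it in detail.
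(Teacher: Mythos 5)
This theorem is imported from \cite{stewart1990matrix} and used without proof in the paper, so there is no in-paper argument to compare against; what you are reconstructing is the textbook derivation. Your outline follows the canonical Sylvester-equation route and is correct in its structure: the pair of residual identities is right, the identification $\|\sin\Phi\|_2 = \|U_2^*\tilde U_1\|_2$ and $\|\sin\Theta\|_2 = \|V_2^*\tilde V_1\|_2$ is the standard one, the block matrix $\Lambda$ with $\Sigma_2$ on the anti-diagonal is Hermitian with eigenvalues $\pm\sigma_i(\Sigma_2)$ so the hypothesis does give a $\beta$-separation from $\tilde\Sigma_1$, and the Sylvester-operator bound is applied correctly.

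The genuine gap is the one you flag yourself: stacking $X$ and $Y$ into one block system forces you to bound the stacked residual, giving $\|R\|_2 \le \sqrt{2}\|E\|_2$ and hence only $\sqrt{2}\|E\|_2/\beta$, and you explicitly defer the removal of that factor to Stewart and Wedin. That deferral means the proposal does not, as written, establish the stated constant $1/\beta$; it is an honest partial proof rather than a complete one. The standard way to close the gap is not to stack: one bounds $\|X\|_2$ and $\|Y\|_2$ individually via an iteration (or resolvent/contraction) argument on the coupled pair, exploiting that each of the two right-hand sides --- $U_2^*E\tilde V_1$ and $V_2^*E^*\tilde U_1$ --- is separately a compression of $E$ with spectral norm at most $\|E\|_2$, so the fixed-point contraction produces the constant $1$ for each subspace individually. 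As a consistency check of the cited statement your proposal is fine, but it is not a self-contained derivation of the bound as stated.
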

We will also need Taylor's theorem with remainder.
\begin{theorem}\label{thm:taylor}
  Let $f: \R \to \R$ be a $C^n$ continuous function over some interval
  $I$. Let $a,b \in I$, then
  \begin{align*}
    f(b) = \sum_{k=1}^{n-1} \frac{f^{(k)} (a)}{k!}(b-a)^k +
    \frac{f^{(n)}(\xi)}{n!} (b-a)^n,
  \end{align*}
  for some $\xi \in [a,b]$.
\end{theorem}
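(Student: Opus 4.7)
The plan is to reduce the identity, via a carefully chosen auxiliary function, to $n$ successive applications of Rolle's theorem. Assume without loss of generality that $a < b$ (the case $a = b$ is trivial and $b < a$ is symmetric). Write $P(x) = \sum_{k=0}^{n-1} \frac{f^{(k)}(a)}{k!}(x-a)^k$ for the Taylor polynomial of degree at most $n-1$ about $a$, and set $R(x) = f(x) - P(x)$. By construction, $R^{(k)}(a) = 0$ for every $k = 0, 1, \ldots, n-1$, and the claim is equivalent to producing $\xi \in (a,b)$ with $R(b) = \frac{f^{(n)}(\xi)}{n!}(b-a)^n$.

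Next I would introduce the auxiliary function
\[
g(x) \;=\; R(x) \;-\; \frac{R(b)}{(b-a)^n}\,(x-a)^n,
\]
which is engineered so that $g(b) = 0$ while simultaneously $g^{(k)}(a) = 0$ for all $k = 0, 1, \ldots, n-1$: the term $(x-a)^n$ and each of its first $n-1$ derivatives vanish at $a$, and the derivatives of $R$ at $a$ already vanish up to order $n-1$. Since $f \in C^n(I)$ and we have only subtracted a polynomial, $g$ inherits $C^n$ regularity on $[a,b]$, so every Rolle application below is legitimate.

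Now I iterate Rolle's theorem. From $g(a) = g(b) = 0$, there exists $\xi_1 \in (a,b)$ with $g'(\xi_1) = 0$. Since also $g'(a) = 0$, applying Rolle on $[a,\xi_1]$ gives $\xi_2 \in (a,\xi_1)$ with $g''(\xi_2) = 0$. Continuing inductively, at stage $k$ we use $g^{(k-1)}(a) = 0$ and $g^{(k-1)}(\xi_{k-1}) = 0$ to produce $\xi_k \in (a, \xi_{k-1})$ with $g^{(k)}(\xi_k) = 0$. After $n$ steps we have $\xi := \xi_n \in (a,b)$ with $g^{(n)}(\xi) = 0$. A direct computation gives $g^{(n)}(x) = f^{(n)}(x) - \frac{n!}{(b-a)^n}R(b)$, since $\deg P \le n-1$ forces $P^{(n)} \equiv 0$ and the $n$-th derivative of $(x-a)^n$ equals $n!$. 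Evaluating at $\xi$ and solving yields $R(b) = \frac{f^{(n)}(\xi)}{n!}(b-a)^n$, which is the stated identity.

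There is no substantive obstacle in this argument beyond bookkeeping; the whole proof turns on the design of $g$, which packs $n+1$ boundary-vanishing conditions (one at $b$ and $n$ at $a$) into a single function so that Rolle can be iterated $n$ times. The only mild care needed is to ensure the $C^n$ hypothesis propagates through each differentiation, which is immediate because $g$ differs from $f$ only by polynomial terms.
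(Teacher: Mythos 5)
Your proof is correct: it is the standard derivation of the Lagrange remainder, packing the $n$ vanishing conditions at $a$ and the one at $b$ into the auxiliary function $g$ and applying Rolle's theorem $n$ times, with the $C^n$ hypothesis (in fact more than) sufficient for each application. The paper offers no proof to compare against --- Theorem \ref{thm:taylor} is quoted as a standard textbook fact and used as a black box. One detail worth noting: as printed, the sum in the statement starts at $k=1$ and thus omits the zeroth-order term $f(a)$; your $P(x)=\sum_{k=0}^{n-1}\frac{f^{(k)}(a)}{k!}(x-a)^k$ silently restores it, which is the intended (and correct) form, since the version literally stated would fail whenever $f(a)\neq 0$.
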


The following simple bounds will be used in estimating the sample complexity.
\begin{lemma}\label{lemma:sample1}
  Suppose that the random vector $x \in\R^n$ is drawn from an
  isotropic distribution $F$. Then for $1 \le j \le n$,
  \begin{align*}
\Var( x_j e^{iu^Tx} ) &\le 1, \quad \Var(x_j^2 e^{iu^Tx}) \le \E{x_j^4} \mbox{ and } 
\Var(x_i x_j e^{iu^Tx} )\le 1 \mbox{ for } i \neq j.
\end{align*}
\end{lemma}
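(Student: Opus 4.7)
The plan is to apply the elementary bound $\Var(Z)=\E{|Z-\E{Z}|^{2}}\le \E{|Z|^{2}}$ valid for any complex-valued random variable $Z$, together with the observation that since $u,x$ are real, $|e^{iu^{T}x}|=1$ almost surely. Consequently, each of the three left-hand sides reduces to a plain moment of a polynomial in the coordinates of $x$, and the three inequalities collapse to estimates on $\E{x_j^{2}}$, $\E{x_j^{4}}$, and $\E{x_i^{2}x_j^{2}}$ respectively.

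First I would dispatch the two easy cases. For the first inequality, $|x_j e^{iu^{T}x}|^{2}=x_j^{2}$, and isotropy ($\E{xx^{T}}=I$) gives $\E{x_j^{2}}=1$. For the second, the same calculation yields $|x_j^{2}e^{iu^{T}x}|^{2}=x_j^{4}$, so $\Var(x_j^{2}e^{iu^{T}x})\le \E{x_j^{4}}$, matching the claim exactly. Both of these follow in one line each from the $|e^{iu^{T}x}|=1$ reduction and isotropy.

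The main obstacle is the third bound, $\Var(x_i x_j e^{iu^{T}x})\le 1$ for $i\ne j$. The naive reduction only gives $\E{x_i^{2}x_j^{2}}$, which is not controlled by isotropy alone (only by the second moments $\E{x_ix_j}=\delta_{ij}$). To close this gap I would exploit the fact that the lemma is used inside \textbf{Recursive FPCA} after the whitening step, so we may assume $x=Us$ with $U$ orthogonal and $s$ independent with $\E{s_k}=0$, $\E{s_k^{2}}=1$. Expanding
\[
\E{x_i^{2}x_j^{2}}=\sum_{k,l,m,p}U_{ik}U_{il}U_{jm}U_{jp}\E{s_k s_l s_m s_p},
\]
only index pairings survive, and the orthogonality identities $\sum_k U_{ik}^{2}=1$ and $\sum_k U_{ik}U_{jk}=0$ (for $i\ne j$) collapse the two ``pair-crossing'' sums, leaving
\[
\E{x_i^{2}x_j^{2}}=1+\sum_{k}U_{ik}^{2}U_{jk}^{2}\bigl(\E{s_k^{4}}-3\bigr).
\]
The moment hypothesis $\M\ge \max_i\E{|s_i|^{5}}$ controls the fourth moments via Lyapunov, and $\sum_k U_{ik}^{2}U_{jk}^{2}\le 1$ controls the mixing weights, so the correction term is absorbed into the stated constant (with any loss folded into the overall polylog factors elsewhere in the analysis).
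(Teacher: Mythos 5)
Your argument for the first two inequalities exactly matches the paper's (hidden) proof: write $\Var(Z)=\E{|Z|^2}-|\E{Z}|^2 \le \E{|Z|^2}$, note $|e^{iu^Tx}|=1$, and read off $\E{x_j^2}=1$ and $\E{x_j^4}$ respectively. That part is correct and identical in spirit.

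Where you diverge is the third inequality, and here you have put your finger on a real problem rather than a stylistic difference. The paper's proof simply asserts that ``the last inequality [uses] isotropy,'' but as you observe, isotropy (i.e.\ $\E{x}=0$, $\E{xx^T}=I$) controls only second moments; it does not bound the mixed fourth moment $\E{x_i^2x_j^2}$, and at $u=0$ the subtracted term $|\E{x_ix_j e^{iu^Tx}}|^2=|\E{x_ix_j}|^2=0$ gives no help. Your identity
\[
\E{x_i^2x_j^2}=1+\sum_{k}U_{ik}^{2}U_{jk}^{2}\bigl(\E{s_k^{4}}-3\bigr)
\]
for $x=Us$ with $U$ orthogonal and $s$ independent and normalized is correct (the three pair-matchings contribute $1-\sum_kU_{ik}^2U_{jk}^2$ and twice $-\sum_kU_{ik}^2U_{jk}^2$, and the diagonal contributes $\sum_kU_{ik}^2U_{jk}^2\E{s_k^4}$), and it shows $\E{x_i^2x_j^2}\le 1$ can genuinely fail when positive excess kurtoses are present. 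So the bound ``$\le 1$'' in the third clause is not attainable from the stated hypothesis; the honest conclusion is $\Var(x_ix_je^{iu^Tx})\le \E{x_i^2x_j^2}\le 1+\max_k|\E{s_k^4}-3|$, which under the paper's moment assumption is $O(1)$ but not $1$. Your closing remark that the discrepancy is ``absorbed into the stated constant'' is defensible in context but should be made explicit; it also helps to note that the paper never actually invokes this third bound — the proof of Theorem~\ref{thm:main} estimates the cross term $\E{xx^T e^{iu^Tx}}$ via Theorem~\ref{thm:vershyninsingular}, and applies Lemma~\ref{lemma:sample1} only to $\E{xe^{iu^Tx}}$ and $\E{e^{iu^Tx}}$ — so the inaccuracy is harmless to the main result. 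In short: your proposal is strictly more careful than the paper's proof and correctly flags that the third inequality does not follow from isotropy alone.
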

\hide{
\begin{proof}
  \[
   \mathrm{Var} ( x_j
    e^{iu^Tx} ) =  \E{x_j^2} - \abs{\E{x_j e^{iu^Tx}}}^2
     \le 1.
  \]
The other parts are similar, with the last inequality using isotropy.
\end{proof}
}

The next is Theorem 1.2 from
 \cite{VershyninSingular}.
\begin{theorem}[\cite{VershyninSingular}]\label{thm:vershyninsingular}
   Consider a random vector $x \in \R^n$ with covariance $\Sigma$,
   such that $\norm{x} \le \sqrt{m}$ almost surely. Let $\eps \in
   (0,1)$ and $t\ge 1$, then with probability at least $1 -
   1/n^{t^2}$, if $N \ge C( t/ \eps)^2 \norm{ \Sigma}^{-1} m \log(n)$,
   then $\norm{\Sigma_N - \Sigma} \le \eps \norm{ \Sigma}$.
 \end{theorem}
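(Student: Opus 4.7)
The plan is a direct application of the matrix Bernstein (Ahlswede--Winter / Tropp) inequality to the empirical covariance, with the key step being a variance calculation that exploits the rank-one structure. Write
\[
\Sigma_N - \Sigma \;=\; \frac{1}{N}\sum_{k=1}^{N} Y_k, \qquad Y_k \;=\; x_k x_k^{\mathrm{T}} - \Sigma,
\]
where the $Y_k$ are i.i.d., centered, and symmetric. The boundedness hypothesis $\norm{x}\le\sqrt{m}$ a.s.\ gives $\norm{Y_k}_2 \le \norm{x_k}^2 + \norm{\Sigma}_2 \le 2m$, using that $\norm{\Sigma}_2 \le \tr{\Sigma} = \E{\norm{x}^2} \le m$. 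For the matrix variance, expanding $\E{Y_k^2} = \E{\norm{x_k}^2\, x_k x_k^{\mathrm{T}}} - \Sigma^2$ and using $\norm{x_k}^2 \le m$ yields $\E{Y_k^2} \preccurlyeq m\,\Sigma$, so $\norm{\E{Y_k^2}}_2 \le m\,\norm{\Sigma}_2$.

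Next I would invoke matrix Bernstein: for any $s>0$,
\[
\prob{\norm{\Sigma_N - \Sigma}_2 \ge s} \;\le\; 2n\,\mathrm{exp}\!\left(-\frac{N s^{2}/2}{m\norm{\Sigma}_2 + (2m/3)\,s}\right).
\]
Setting $s=\eps\norm{\Sigma}_2$ with $\eps\in(0,1)$ makes the first term in the denominator dominate, collapsing the bound to $2n\,\mathrm{exp}(-c\,N\eps^{2}\norm{\Sigma}_2/m)$. Demanding this be at most $n^{-t^{2}}$ yields the quoted sample size
\[
N \;\ge\; C\,(t/\eps)^{2}\,\norm{\Sigma}_2^{-1}\,m\,\log n.
\]

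The essential point — and the only step where something beyond generic concentration is used — is the variance inequality $\norm{\E{Y_k^2}}_2 \le m\norm{\Sigma}_2$ rather than the trivial $m^{2}$; this is what produces the factor $\norm{\Sigma}_2^{-1}$ in the sample requirement and makes the bound effectively dimension-free apart from the $\log n$ factor. The only subtle obstacle is that Vershynin's original Theorem~1.2 in fact shaves the $\log n$ down to $\log(1/\eps)$ (or a similar slowly-growing factor) using a truncation and decoupling argument tailored to rank-one summands; reproducing that refinement is substantially more delicate than the matrix-Bernstein route above. Since the statement we need only claims $\log n$, the direct Bernstein argument suffices and is the one I would write up.
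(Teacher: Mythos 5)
Your proposal is correct as a proof of the stated bound, but there is nothing in the paper to compare it against: the authors import this as Theorem~1.2 of \cite{VershyninSingular} and never reprove it — it is an external ingredient, not a result the paper establishes. So the right question is whether your argument actually delivers the quoted statement, and it does. The decomposition $\Sigma_N-\Sigma=\frac{1}{N}\sum_k(x_kx_k^{\mathrm{T}}-\Sigma)$, the operator-norm bound $\norm{Y_k}\le 2m$ via $\norm{\Sigma}\le\tr{\Sigma}\le m$, and the variance estimate $\E{Y_k^2}=\E{\norm{x_k}^2 x_kx_k^{\mathrm{T}}}-\Sigma^2\preccurlyeq m\Sigma$ are all correct, and substituting $s=\eps\norm{\Sigma}$ into matrix Bernstein gives $2n\,\mathrm{exp}\!\left(-cN\eps^2\norm{\Sigma}/m\right)\le n^{-t^2}$ once $N\ge C(t/\eps)^2\norm{\Sigma}^{-1}m\log n$ with $C$ a sufficiently large absolute constant. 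You also correctly flag that Vershynin's original argument is sharper (a $\log$ factor independent of $n$ via truncation/decoupling), but the weaker $\log n$ version stated and used here falls out of the Bernstein route directly. One small implicit assumption worth making explicit: your $Y_k$ uses $\E{x_kx_k^{\mathrm{T}}}=\Sigma$, i.e.\ mean zero; in the general case one centers $x$ first, and the same bounds go through with a routine adjustment. Otherwise the argument is complete and correct.
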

}

\begin{proof}[of Theorem \ref{thm:main}]
If $A$ is not unitary to begin with, by Theorem \ref{thm:vershyninsingular}  a sample of size $O(n\log n/\eps_1^2)$ can be used to put $x$ in near-istropic position. After this transformation the matrix $\tilde{A}$ obtained is nearly unitary, i.e., $\|\tilde{A}- \bar{A}\|_2 \le \eps'$ where $\bar{A}$ is unitary. Henceforth, we assume that $\bar{A}=A$.
First, we prove that when we run 
  the algorithm and compute a set of eigenvalues, that there exists at least one large gap in
  the set $\{ \textrm{Re}(g_j( t_j)) \}$, the diagonal entries in the decomposition of $\textrm{Re}(\Sigma_u)$. 
We will do this for the unitary matrix $A$, and it will follow for the estimated matrix $\tilde{A}$, since their eigenvalues are within $\eps'$.

We recall that $g_j = \psi_j''$, and using the Taylor expansion of $\psi_j$ (\ref{psi}), we write each $g_j$
as follows.
\begin{align}\label{eqn:taylor}
  g_j(t_j) = -\sum_{l=2}^k \cum_l(s_j)\frac{(it_j)^{l-2}}{(l-2)!} - g^{(k+1)}(\xi)\frac{(it_j)^{k-1}}{(k-1)!}
\end{align}
where $ \xi \in [0,t_j]$ and $p_i$ is a polynomial of degree $(k-2)$.
Using $k=4$ and $j=1$,
  \begin{align*}
    g_1(t_1) = -1 -  \cum_3(s_1) (it_1) -  \cum_4(s_1) \frac{(it_1)^2}{2}
    + R_{1}(t_1) \frac{(it_1)^3}{3!}. 
  \end{align*} 
  When we take the real part of the matrix in Step 6 of the
  algorithm, we can discard the pure imaginary term arising from the
  first cumulant.  We must retain the error term as we do not
  know a priori whether the error derivative term has a complex
  component or not. Truncating after the second order terms,
  this gives a family of polynomials 
\[
p_j(t_j) = -1 + \cum_4(s_j)\frac{t_j^2}{2}.
\]
Since $\cum_4(s_j) \ge \Delta$ and $t_j$ is drawn from $N(0,\sigma^2)$,  we can now apply Theorem \ref{thm:maxgap} that shows that with
  probability $1/2000 \log^2 n$, 
\[
\maxgap{p_j(t_j)} \ge \frac{\Delta \sigma^2}{50}. 
\]
Thus with $8000 \log^3 n$ different random vectors $u$, with probability
  at least $1- (1/n^2)$ we will see a gap of at least this magnitude. 
\inshort{We then bound the remainder term using the version of Taylor's theorem with a remainder.}

\infull{
Next, we bound the remainder. Using Lemmas 4.9 and 10.1 from \cite{GVX14}, for $t_j \in [-1/4, 1/4]$, we have
\[
|R_j(t_j)| \le \frac{4! 2^4  \E{|s_j|^5}}{(3/4)^5}. 
\]
So the full remainder term with probability at least $1- (1/n^2)$ 
is at most
\begin{align*}
|R_j(t_j)\frac{(t_j)^3}{3!}| \le \frac{4^7}{3^5}\E{|s_j|^5}|t_j|^3 &\le \frac{4^7}{3^5}\E{|s_j|^5}\sigma^3 (4\log n)^{3/2}\le  \frac{1}{100} \Delta \sigma^2
\end{align*}
for 
\[
\sigma \le \frac{\Delta}{1000 (\log^{3/2} n)\E{|s_j|^5}}.
\]
}

Let $V$ and $V^{\perp}$ denote the sets of eigenvectors above and below the maximum gap
  respectively. We bound the error using \inshort{Wedin's theorem \cite{stewart1990matrix},}
\infull{Theorem \ref{thm:Wedin},} which bounds the canonical angles in terms of the gap. Suppose that in each iteration, we
  take enough samples so that the empirical version of $D^2 \psi(u)$ is within $\eps'$ of the true one. Then applying the theorem yields that for the subspaces spanned by $V$
  and $W = V^\perp$, that there exists a partition of the columns of $A$
  (which we may take, without loss of generality, to be ordered
  appropriately) such that:
  \begin{align*}
    \norm{ \sin (\Theta(V, \{A^1, \ldots, A^k \})) } \le
    \frac{50\eps'}{\Delta\sigma^2}.  
  \end{align*}
 For Recursive FPCA in the subspace $V$ of
  dimension $k$, we can write the matrix as:
  \begin{align*}
     D^2 \psi(u) & = (V^T[A^1, \ldots, A^k])\diag{ \lambda_1,
      \ldots, \lambda_k} (V^T[A^1, \ldots, A^k])^T \\
    & \qquad + (V^T [A^{k+1},
      \ldots, A^n]) \diag{\lambda_{k+1}, \ldots, \lambda_n} (V^T [A^{k+1},
      \ldots, A^n])^T
  \end{align*}  
The additional sampling error in this iteration (for the new $u$) is bounded by $\eps'$.
 By definition, we have that $\sin(\Theta) = V^T [A^{k+1},
    \ldots, A^n]$, thus the second term is upper bounded by
  $(50 \eps' / \Delta \sigma^2)^2$.  
For the first term, we can imagine making $V^TT(u,u)V$ isotropic, in which case we must account for the slight 
nonorthogonality of the columns of $V$ and $A^1, \ldots, A^k$. For this we have to multiply the error by 
\[
\|(V^T[A^1,\ldots,A^k])^{-1}\|_2 \le \|\cos(\Theta)^{-1}\| \le \|I+\sin(\Theta)^2\|
\]
where $\sin(\Theta)$ is the error accumulated so far, and we assume that $\|\sin(\Theta)\| \le 1/2$. 

 We can write the recurrence for the overall error $E_i$ at a
  recursive call at depth $i$, then:
  \begin{align*}
    E_{i} \le (1+E_{i-1}^2)\left(\eps' + \left( \frac{E_{i-1}}{ c \Delta\sigma^2}\right)^2\right).
  \end{align*}
 We apply Claim \ref{claim:recurrence} (in the appendix) with $a = \eps'$ and $ b = \Delta\sigma^2/50$. 
 To satisfy the condition it suffices to have
$\eps' \le (\Delta\sigma^2/50)^2/8$, which we will achieve by setting 
$\eps' = \eps\Delta\sigma^2/100$
since $\eps < \Delta^3/(10^8\M_5^2\log^3 n)$.   In the terminal nodes of the recurrence, the error gets blown up to at most $2\eps'$  and this implies a final error between the output vectors and the columns of $A$ of at most $2\eps'/(\Delta\sigma^2/50) \le \eps$.

 For the sample complexity of a single eigendecomposition, we have to take enough samples so
 that for $8000 \log^3 n$ different instantiations of $D^2 \psi(u)$, the spectral norm error is within $\eps'$
 with high probability. 
It suffices to estimate three matrix-valued random
 variables $\E{ xx^T \exp( iu^Tx)}$, $\E{x \exp(iu^Tx)}$ and
 $\E{\exp(iu^Tx)}$. The latter two are easy to estimate using $O(n)$
 samples \infull{by applying Lemma \ref{lemma:sample1}}. Thus, it suffices for
 us to show that we can estimate the second order term  $\E{ xx^T
   \exp( iu^Tx)}$ using only a nearly linear number of samples. 
   \infull{
  We rewrite this term as four easy-to-estimate parts:
\begin{align*}
  \E{xx^T \exp(iu^Tx)} 
  & = \E{xx^T \mathbbm{1}_{\cos(u^Tx) \ge 0} \cos(u^Tx)} - \E{xx^T
    \mathbbm{1}_{\cos(u^Tx) < 0} \abs{\cos(u^Tx)}} \\ 
  & \qquad + i \E{xx^T \mathbbm{1}_{\sin(u^Tx) \ge 0} \sin(u^Tx)} - i
  \E{xx^T \mathbbm{1}_{\sin(u^Tx) < 0} \abs{\sin(u^Tx)}}
\end{align*}
We estimate the four terms using independent samples to within error $\eps/4$ in spectral norm. Consider, for
example, the first term $xx^T \mathbbm{1}_{\cos{u^Tx} \ge 0}
\cos(u^Tx)$, then we can define the random vector 
\[
y = x
\mathbbm{1}_{\cos(u^Tx) \ge 0} \sqrt{\cos(u^Tx)}, 
\]
so that the first term is $yy^T$.
In particular, observe that $0 \le \E{(u^Ty)^2} \le \E{(u^Tx)^2} \le
1$ for all unit vectors $u$. Thus, we must have that the eigenvalues
of $\E{yy^T}$ are all bounded by 1. Note also, that $\norm{y} \le
\sqrt{m}$ if this is in fact the case for $x$ as well. Now, we apply
Theorem 1.2 from \cite{VershyninSingular} to $y$: by hypothesis, we
can take $m = K^2 n$ and $t = 2$. 
}
Let $N$ be the number of samples needed to get a failure probability smaller than $1/n^2$.

Next, we argue that we can re-use the $N$ samples from the initial phase for
the entire algorithm (without re-sampling), and apply the union
bound to get a failure probability bounded by $1/n^2$, thereby giving us a
high probability of success for the entire algorithm.
To find a good  vector $u$, we form the second derivative matrices for a set of $u$'s, explicitly compute the eigenvalue gaps, find the largest gap for each matrix, and pick the $u$ whose matrix has the largest gap. Finding a good $u$ here depends only on the randomness of $u$.
We note that each level of decomposition we estimate is of the form 
$\E{(P^Tx)(P^Tx)^T\expn{i u^Tx}} = P^T\E{xx^T\diag{\expn{i u^Tx}}}P$, with the expectation independent 
of the current projection. We will do this 
for at most $O(n\log^3 n)$ different vectors $u$, which are random and independent of the sample.
To achieve overall error $\eps$, the sample complexity is thus
\[
O\left(K^2 n \frac{\log n}{(\eps')^2}\right) = O\left(n \cdot \frac{K^2 \M^4 \log^7 n}{\Delta^6\eps^2} \right).
\]
\end{proof}

\section{Conclusion}

Our work was motivated by experiments on Fourier PCA and tensor-based
methods, which appeared to need a rather large number of samples even
for modest values of the dimension $n$. In contrast, the recursive algorithm 
presented here scales smoothly with the dimension, and is available as
MATLAB code \citep{Xiao14}. 

Analyzing the gaps of a family of polynomials over
Gaussians is an interesting problem on its own. One surprise here is
that even for degree $3$, the polynomial $p(x) = x(x-a)(x+a)$ where $a
= \sqrt{2 \log(n)}$ evaluated at $n$ random points from $N(0,1)$ has
maximum gap only $O(1/n^{0.6})$, no longer polylogarithmic as in the case of 
degree $1$ or $2$.

\bibliography{ICA_bibliography,stat_algs}

\section{Appendix}
  \begin{claim}\label{claim:recurrence}
  Let $a, b \in [0,1]$, $a \le b^2/8$, and define the recurrence:
\[
y_0 = 0, \quad 
y_{i+1} = (1+y_i^2)\left(a + \left(\frac{y_i}{b}\right)^2\right).
\] 
Then $y_i \le 2a$ for all  $i$.
\end{claim}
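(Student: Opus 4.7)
The plan is a straightforward induction on $i$, with the inductive hypothesis being $y_i \le 2a$. The base case $y_0 = 0 \le 2a$ is immediate, so all the work is in the inductive step: assuming $y_i \le 2a$, I need to show that
\[
y_{i+1} = (1+y_i^2)\left(a + \left(\frac{y_i}{b}\right)^2\right) \le 2a.
\]
Since $y \mapsto (1+y^2)(a + y^2/b^2)$ is monotone in $y$ on $[0,\infty)$, it suffices to verify the bound at $y_i = 2a$.

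The central observation that drives everything is that the hypothesis $a \le b^2/8$ controls the dominant quadratic term: expanding at $y_i = 2a$,
\[
a + \frac{(2a)^2}{b^2} = a + \frac{4a^2}{b^2} \le a + \frac{4a \cdot (b^2/8)}{b^2} = a + \frac{a}{2} = \frac{3a}{2}.
\]
This is the only step that uses the constant $8$ in the hypothesis, and it gives the correct leading constant. The remaining terms from expanding $(1+4a^2)(a+4a^2/b^2)$, namely $4a^3$ and $16a^4/b^2$, are strictly smaller order: using $a \le b^2/8 \le 1/8$ (recall $b \le 1$), one gets $4a^3 \le a/16$, and for the last term I use $a^2/b^2 \le (b^2/8)\cdot a/b^2 = a/8$ to conclude $16a^4/b^2 = 16a^2(a^2/b^2) \le 2a^3 \le a/32$.

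Adding up gives $y_{i+1} \le 3a/2 + a/16 + a/32 = 51a/32 < 2a$, closing the induction. I do not foresee any real obstacle here; the only thing to be careful about is that the inductive hypothesis $y_i \le 2a \le 1/4$ guarantees $y_i^2 \le 1$, so the factor $(1+y_i^2)$ never blows up, and the bound on $a$ really does absorb both the multiplicative and the additive contributions of the recurrence.
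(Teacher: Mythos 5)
Your proof is correct and follows essentially the same route as the paper: induction on $i$, using monotonicity to reduce to the substitution $y_i = 2a$, expanding the product, and absorbing each term via $a \le b^2/8 \le 1/8$. The only difference is cosmetic — you spell out the monotonicity step and bound each of the four expansion terms separately, whereas the paper states the same chain of inequalities more tersely.
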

\begin{proof}
  We proceed via induction. Clearly this is true for $i=0$. Now
  suppose it is true for up to some $i$. Then
\[    y_{i+1} = (1+4a^2)(a + (2a / b)^2) \le a\left(1+ 4a^2 + 4\frac{a}{ b^2} + 16\frac{a^3}{b^2}\right) \le 2a
\] 
since $a \le b^2/8 \le 1/8$.
\end{proof}

\noindent
{\bf Acknowlegements.}  We are grateful to the anonymous referees for their suggestions. This work was supported in part by NSF awards  CCF-1217793 and EAGER-1415498.

\end{document}